\newcommand{\ind}[1]{\mbox{#1}}
\newtheorem{theorem}{Theorem}
\newtheorem{lemma}{Lemma}
\newtheorem{remark}{Remark}
\newproof{proof}{Proof}
\begin{document}
\begin{frontmatter}

\title{Correcting matrix products over the ring of integers}

\author[hlwang]{Yu-Lun~Wu}
\ead{60947037s@ntnu.edu.tw}

\author[hlwang]{Hung-Lung~Wang\corref{cor1}}
\ead{hlwang@ntnu.edu.tw}

\address[hlwang]{Department of Computer Science and Information Engineering, \\
National Taiwan Normal University, \\
No.88, Sec. 4, Tingzhou Rd., Wenshan Dist., Taipei City 116, Taiwan}

\cortext[cor1]{Corresponding author}

\begin{abstract}
 Let $A$, $B$, and $C$ be three $n\times n$ matrices. We investigate the problem of verifying whether $AB=C$ over the ring of integers and finding the correct product $AB$. Given that $C$ is different from $AB$ by at most $k$ entries, we propose an algorithm that uses $O(\sqrt{k}n^2+k^2n)$ operations. 
    Let $\alpha$ be the largest absolute value of an entry in $A$, $B$, and $C$. The integers involved in the computation are of $O(n^3\alpha^2)$.
\end{abstract}

\begin{keyword}
matrix multiplication\sep certifying algorithm\sep correction
\end{keyword}

\end{frontmatter}

\section{Introduction}

Designing efficient algorithms to multiply two matrices is an ongoing research topic, due to its wide applications. The research originates from Strassen~\cite{Strassen69} in 1969, and the known fastest algorithm was proposed by Duan, Wu, and Zhou~\cite{DuanWZ23} in 2023, with time complexity $O(n^{2.37188})$, where $n$ is the row- and column-dimensions of the matrices. Although the best known algorithm for matrix multiplication is still super-quadratic, verifying  whether the product of two matrices is equal to a given matrix can be done more efficiently. Freivalds~\cite{Freivalds77} proposed a randomized algorithm that verifies with high probability whether $AB=C$, for given matrices $A$, $B$, and $C$. In Freivalds' algorithm, a vector $\vectorbold{v}$ is chosen at random, and the result of whether $A(B\vectorbold{v})=C\vectorbold{v}$ is reported. Clearly, Freivalds' algorithm has one-sided error; namely the answer may be false-positive. Nevertheless, there are only $O(n^2)$ operations needed. 

\medskip
It is still unknown whether Freivalds' algorithm can be efficiently derandomized~\cite{GkasieniecLLPT17}. An immediate attempt for derandomizing Freivalds' algorithm is to choose a vector $\vectorbold{x}$ of the form $(1, x, x^2, \dots, x^n)$. Then $AB\vectorbold{x}-C\vectorbold{x}$ consists of $n$ polynomials in $x$. The problem reduces to determining whether the nonzero polynomials are all zero at ${x}$. If $x$ is chosen as a non-root to all nonzero polynomials, then the modified algorithm verifies whether $AB=C$ deterministically. However, even though such an $x$ can be found, for example, by applying Cauchy's bound, one needs to manipulate numbers which are exponentially large. This makes the algorithm impractical. Related discussion can be found in~\cite{GkasieniecLLPT17,IvJi14}. 

\medskip
Recently, G{\k{a}}sieniec et al.~\cite{GkasieniecLLPT17} showed that, if the matrix $C$ has at most $k$ erroneous entries, then these entries can be identified and also corrected in deterministic  $\tilde{O}(kn^2)$ time, where $\tilde{O}(\cdot)$ suppresses polylogarithmic terms in $n$ and $k$. Their approach resembles methods from combinatorial group testing~\cite{DuHwang99}. Based on the same concept, randomized algorithms were also developed for $k$ equal to the number of erroneous entries. When the number of erroneous entries is not known, they applied the technique of compressed matrix multiplication~\cite{Pagh13}, and developed a randomized algorithm which runs in $\tilde{O}(n^2+kn)$ time. Note that the algorithms of G{\k{a}}sieniec et al. applies for matrix multiplications over any ring. For matrix multiplications over the ring of integers, Kutzkov~\cite{kutzkov} developed a deterministic algorithm running in $O(n^2 +k^2 n \log^5 n)$ time; K\"{u}nnemann~\cite{Kunnemann18} proposed an ${O}(\sqrt{k}n^2\log^{2+o(1)} n+k^2\log^{3+o(1)}n)$-time deterministic algorithm, given that the absolute values of the input integers are upper bounded by $n^c$ for some constant $c$.

\medskip
In this paper, we are also concerned with the problem of correcting $C$ to be the product of $A$ and $B$, where $A$, $B$, and $C$ are three $n\times n$ matrices with $C$ different from $AB$ by at most $k$ entries. The matrix multiplication is restricted to be over the ring of integers. We develop a deterministic algorithm, which runs in $O(\sqrt{k}n^2+k^2n)$ time. 
Our algorithm is purely combinatorial; i.e. it does not rely on fast matrix multiplications or other well-developed subroutines. The idea is inherited from  G{\k{a}}sieniec et al.~\cite{GkasieniecLLPT17}, based on combinatorial group testing. This can also be viewed as an extension of Freivalds' algorithm; a number of vectors $\vectorbold{x}_1, \vectorbold{x}_2, \dots, \vectorbold{x}_m$ are chosen, and the erroneous entries are identified and corrected based on the result of $(AB-C)(\vectorbold{x}_1\mid  \vectorbold{x}_2\mid  \dots\mid \vectorbold{x}_m)$. 

\medskip
We note here that the values manipulated by the algorithm are of $O(\alpha^2n^3)$, where $\alpha$ is the largest absolute value of an entry in the input matrices. All values to manipulate are in a reasonable range, namely polynomial in both $n$ and the input values. In the remainder of this paper, 
we summarize necessary tools in Section~\ref{sec:pre}, and then give the algorithm in Section~\ref{sec:main}.




\section{Preliminaries}
\label{sec:pre}

For a matrix $A$, we use $A_{ij}$ to denote the entry at the intersection of the $i$th row and the $j$th column. We use $[n]$ to denote the set $\{x\in\mathbb{N}\colon\, 1\leq x\leq n\}$. 
Let $x_1,x_2,\dots, x_n$ be $n$ integers. The $n\times n$ Vandermonde matrix of $x_1,x_2,\dots, x_n$ is defined to be the matrix $V$ with $V_{ij}=x_i^{j-1}$; i.e.

\[V=\left(\begin{array}{ccccc}
    1 & x_1 & x_1^2 & \dots & x_1^{n-1} \\
    1 & x_2 & x_2^2 & \dots & x_2^{n-1} \\
    \vdots & \vdots & \vdots & \ddots & \vdots \\
    1 & x_n & x_n^2 & \dots & x_n^{n-1} \\ 
\end{array}\right)\]

\medskip
\noindent
The determinant of the $n\times n$ Vandermonde matrix is 
\begin{equation}
\label{eq:det_vand}
    \prod_{1\leq i<j\leq n}(x_j-x_i).
\end{equation}

\noindent
Let $p$ be a prime greater than $n$, and let $x_1, x_2,\dots, x_n$ be $n$ distinct integers less than $p$. Consider the matrix  $X$, with $0\leq X_{ij}<p$ and 
\[X_{ij}\equiv x_i^{j-1}\pmod{p}.\]

\noindent
By~\eqref{eq:det_vand}
\[\det(X)\equiv \prod_{1\leq i<j\leq n}(x_j-x_i)\pmod{p}.\]
Since $p$ is prime, $\det(X)\neq 0$. Thus, $X$ is invertible over $\mathbb{R}$. The following lemma is an immediate consequence of the discussion above.

\begin{lemma}
\label{lem:key}
    Let $p$ be a prime, and let $x_1, x_2,\dots, x_n$ be $n$ distinct integers less than $p$. For the $n\times n$ matrix $X$, with $0\leq X_{ij}<p$ and $X_{ij}\equiv x_i^{j-1} \pmod{p}$, and a column vector  $\vectorbold{y}\in\mathbb{R}^n$,  
    \[X\vectorbold{y} = \mathbf{0} \iff \vectorbold{y} = \mathbf{0}.\]
\end{lemma}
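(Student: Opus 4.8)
The plan is to reduce the biconditional to the single fact that the matrix $X$ is invertible over $\mathbb{R}$, which the discussion preceding the statement has already essentially supplied. First I would recall that $X$ is the entrywise reduction modulo $p$ of the Vandermonde matrix of $x_1,x_2,\dots,x_n$, so that by~\eqref{eq:det_vand} its determinant satisfies $\det(X)\equiv\prod_{1\leq i<j\leq n}(x_j-x_i)\pmod{p}$.

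The key step is to confirm that this product is nonzero modulo $p$. Since the $x_i$ are distinct and each lies in $\{0,1,\dots,p-1\}$, every difference $x_j-x_i$ is a nonzero integer whose absolute value is strictly smaller than $p$, and is therefore not divisible by $p$. As $p$ is prime, a product of integers none of which is a multiple of $p$ is again not a multiple of $p$; hence $\det(X)\not\equiv 0\pmod{p}$, and in particular $\det(X)\neq 0$ when regarded as an integer. Consequently $X$ is invertible over $\mathbb{R}$.

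Once invertibility is in hand, the two directions are routine. For the forward implication I would assume $X\vectorbold{y}=\mathbf{0}$ and multiply on the left by $X^{-1}$ to obtain $\vectorbold{y}=X^{-1}\mathbf{0}=\mathbf{0}$; the reverse implication is immediate, since $X\mathbf{0}=\mathbf{0}$.

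I do not expect a real obstacle here, as the lemma is billed as an immediate consequence of the preceding computation. The only point that genuinely requires care is the claim that the Vandermonde product survives reduction modulo $p$, namely that no factor $x_j-x_i$ vanishes mod $p$. This is precisely where both hypotheses are used: distinctness of the $x_i$ guarantees that each difference is nonzero, while the bound $x_i<p$ guarantees that no difference is large enough in absolute value to be divisible by $p$, so I would take care to invoke both.
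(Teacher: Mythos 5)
Your proposal is correct and follows essentially the same route as the paper: the paper establishes the lemma in the discussion immediately preceding it, by computing $\det(X)\equiv \prod_{1\leq i<j\leq n}(x_j-x_i)\pmod{p}$ and using primality of $p$ together with the distinctness of the $x_i$ (read, as you do, as residues in $\{0,1,\dots,p-1\}$) to conclude $\det(X)\neq 0$, hence invertibility of $X$ over $\mathbb{R}$. Your spelling out of the two directions of the biconditional via $X^{-1}$ is just the routine completion the paper leaves implicit when calling the lemma an immediate consequence.
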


\medskip
\section{Correcting the product}
\label{sec:main}

In the following, we use $n$ to denote the row- and column-dimensions of the input matrices, and we let $p$ be a prime number greater than $n$. 



\subsection{The certificate} 
\label{sec:certificate}

Let $m$ be a positive integer, and let $V$ be an $n\times m$ matrix with entries in $[p]$ such that
\[V_{ij}\equiv i^{j-1} \pmod{p}. \]
We call $V$ an \textit{$m$-certificate}. 
Given the $m$-certificate $V$, a nonzero row $\vectorbold{u}$ of $AB-C$  is \textit{$V$-detectable} if $\vectorbold{u}V\neq \mathbf{0}$. Similarly,  a nonzero column $\vectorbold{u}$ of $AB-C$ is {$V$-detectable} if $\vectorbold{u}^TV\neq \mathbf{0}$.

\begin{lemma}
\label{lem:detectability}
    Let $V$ be an $m$-certificate. Then every nonzero row or column of $AB-C$ containing at most $m$ nonzeros is $V$-detectable. 
\end{lemma}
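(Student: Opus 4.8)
The plan is to reduce the claim to the invertibility of a square Vandermonde matrix, which is exactly what Lemma~\ref{lem:key} provides. I treat the row case; the column case is identical after replacing $\vectorbold{u}V$ by $\vectorbold{u}^TV$. So let $\vectorbold{u}$ be a nonzero row of $AB-C$ whose support $S=\{i_1<i_2<\dots<i_t\}$ has size $t\le m$, and note $t\ge 1$ since $\vectorbold{u}\neq\mathbf{0}$.

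First I would isolate a square submatrix of $V$ that is guaranteed to be nonsingular. Because $t\le m$, the leftmost $t$ columns of $V$ exist; restricting $V$ to the rows indexed by $S$ and to columns $1,\dots,t$ yields a $t\times t$ matrix $X$ with $X_{sj}=V_{i_s j}$, so that $0\le X_{sj}<p$ and $X_{sj}\equiv i_s^{\,j-1}\pmod p$. The nodes $i_1,\dots,i_t$ are distinct integers, each at most $n<p$, so $X$ is precisely a matrix of the form treated in Lemma~\ref{lem:key} (with $n$ replaced by $t$). Hence $X$ is invertible over $\mathbb{R}$.

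Next I would connect $X$ to the product $\vectorbold{u}V$. Writing $\vectorbold{w}=(u_{i_1},\dots,u_{i_t})$ for the restriction of $\vectorbold{u}$ to its support, the only nonzero contributions to $\vectorbold{u}V$ come from the rows in $S$, so for each $j\in[t]$ the $j$th entry of $\vectorbold{u}V$ equals $\sum_{s=1}^{t}w_s X_{sj}=(\vectorbold{w}X)_j$. In other words, the first $t$ coordinates of $\vectorbold{u}V$ are exactly the entries of $\vectorbold{w}X$. Since $\vectorbold{w}\neq\mathbf{0}$ and $X$ is invertible, $\vectorbold{w}X\neq\mathbf{0}$, whence $\vectorbold{u}V\neq\mathbf{0}$ and $\vectorbold{u}$ is $V$-detectable.

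The argument is short, and its only delicate point is the choice of submatrix: I must select $t$ columns for which the resulting $t\times t$ matrix is nonsingular. Taking the leftmost $t$ columns makes $X$ a genuine Vandermonde matrix in the consecutive powers $0,1,\dots,t-1$, so Lemma~\ref{lem:key} applies verbatim and no generalized-Vandermonde reasoning is needed. The remaining subtlety is that Lemma~\ref{lem:key} is phrased for the right null space, i.e.\ $X\vectorbold{y}=\mathbf{0}\iff\vectorbold{y}=\mathbf{0}$, whereas here I need the left null space to be trivial; this is immediate because $X$ is square, so its invertibility also forces $\vectorbold{w}X=\mathbf{0}\Rightarrow\vectorbold{w}=\mathbf{0}$. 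This is really the main (and essentially the only) point to watch.
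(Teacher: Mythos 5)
Your proof is correct and takes essentially the same route as the paper's: restrict attention to the support $S$ of the nonzero row, extract a Vandermonde-type submatrix of $V$ with distinct nodes less than $p$, and invoke Lemma~\ref{lem:key} to conclude nonsingularity. If anything, you are more careful than the published proof, which asserts that $V_S$ is an ``$m\times m$'' Vandermonde matrix even when $|S|<m$ and silently applies Lemma~\ref{lem:key} (stated for the right null space) to a left multiplication --- both points you handle explicitly by truncating to the first $t$ columns and noting that invertibility of a square matrix kills the left null space as well.
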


\begin{proof}
    Let $\vectorbold{u}=(u_1,\dots, u_n)$ be a nonzero row of $AB-C$, and let $S = \{i\in[n]\colon\, u_i\neq 0\}$. The submatrix $V_S$ of $V$ consisting of the rows indexed by $S$ is an $m\times m$ Vandermonde matrix of $m$ distinct elements over $\mathbb{Z}/p\mathbb{Z}$. By Lemma~\ref{lem:key}, $\vectorbold{u}\neq \mathbf{0}$ iff $\vectorbold{u}_SV_S\neq \mathbf{0}$, where $\vectorbold{u}_S$ is the subvector consisting of entries indexed by~$S$.
\qed\end{proof}

\medskip
In the following,  we assume that $\sqrt{k}$ is an integer to ease the presentation, and $m$ is fixed to be $\sqrt{k}$. For succinctness a $\sqrt{k}$-certificate $V$ is simply called a \textit{certificate}, and a row or a column that is $V$-detectable is simply said to be \textit{detectable}. 
For a certificate  $V$, the matrices $\ind{IC}=(AB-C)^TV$ and $\ind{IR}=(AB-C)V$ are called the \textit{column-indicator} and 
\textit{row-indicator} of $AB-C$, respectively. 

\begin{lemma} \label{lem:both_zero}
    Given three $n\times n$ matrices $A$, $B$ and $C$, let $V$ be a certificate, and let $\ind{IC}$ and $\ind{IR}$ be the \textit{column-indicator} and \textit{row-indicator} of $AB-C$, respectively.  If $\ind{IC}={\bf 0}$ and $\ind{IR}={\bf 0}$, then $AB=C$.
\end{lemma}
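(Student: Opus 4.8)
The plan is to write $D = AB - C$ and show that the two hypotheses, which read $\ind{IR} = DV = \mathbf{0}$ and $\ind{IC} = D^T V = \mathbf{0}$, force $D = \mathbf{0}$. The argument rests on the contrapositive of Lemma~\ref{lem:detectability}, on the standing assumption that $C$ differs from $AB$ in at most $k$ entries (so $D$ has at most $k$ nonzeros), and on the fact that the certificate width is $m = \sqrt{k}$. I flag at the outset that the $k$-bound is genuinely needed: by itself $DV = \mathbf{0}$ only says the columns of $V$ lie in the null space of $D$, which leaves room for a nonzero $D$ unless the number of nonzeros is controlled, so the hypotheses alone do not suffice without the global error bound.

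I would proceed by contradiction. Assume $D \ne \mathbf{0}$ and fix a position $(i,j)$ with $D_{ij} \ne 0$. Row $i$ is then a nonzero row, and since $DV = \mathbf{0}$ its product with $V$ vanishes, so row $i$ is not detectable. The contrapositive of Lemma~\ref{lem:detectability} states that a nonzero, non-detectable row cannot have at most $m$ nonzeros; hence row $i$ carries more than $\sqrt{k}$ nonzeros. Let $j_1,\dots,j_t$ be the columns hosting them, so $t > \sqrt{k}$.

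Next I would apply the same reasoning in the transposed direction. Each column $j_s$ contains the nonzero entry $D_{i j_s}$, so it is a nonzero column, and because $D^T V = \mathbf{0}$ it too is non-detectable; Lemma~\ref{lem:detectability} then forces each such column to hold more than $\sqrt{k}$ nonzeros. Since $j_1,\dots,j_t$ are distinct columns, the nonzeros they contain are distinct entries of $D$, so summing over the $t > \sqrt{k}$ columns yields more than $\sqrt{k}\cdot\sqrt{k} = k$ nonzeros in $D$, contradicting the $k$-bound. Thus $D = \mathbf{0}$, i.e. $AB = C$.

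The step I expect to be most delicate is the final count, where one must be sure the entries being tallied lie in distinct columns and so cannot be double-counted; this is precisely why the two indicator conditions must be used together, the row condition to manufacture more than $\sqrt{k}$ distinct nonzero columns and the column condition to fill each of them with more than $\sqrt{k}$ entries. The remainder is a routine double application of Lemma~\ref{lem:detectability} in its contrapositive form.
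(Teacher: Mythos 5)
Your proof is correct and follows essentially the same route as the paper's: both apply the contrapositive of Lemma~\ref{lem:detectability} to conclude that every nonzero row and column of $AB-C$ has more than $\sqrt{k}$ nonzeros, then count more than $\sqrt{k}\cdot\sqrt{k}=k$ nonzeros to contradict the standing error bound (the paper counts from a nonzero column, you from a nonzero row, which is the same argument transposed). Your explicit flag that the at-most-$k$-errors hypothesis is indispensable is a worthwhile clarification, since the lemma's statement leaves it implicit.
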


\begin{proof}
    Since $\ind{IC}={\bf0}$ and $\ind{IR}={\bf 0}$, by Lemma~\ref{lem:detectability}, every column or row of $AB-C$ that contains a nonzero has more than $\sqrt{k}$ nonzeros. The existence of a nonzero column implies that $AB-C$ contains more than $k$ nonzeros.   
\qed\end{proof}



Observe that for a row of $AB-C$ containing more than $\sqrt{k}$ nonzeros, one of these nonzero entries must be in a column containing less than $\sqrt{k}$ nonzeros, given that $AB-C$ contains at most $k$ nonzeros. As an immediate consequence, we note the following.

\begin{remark}\label{rmk:one_zero}
    Every non-detectable row (respectively, column) contains a nonzero entry residing in a detectable column (respectively, row). 
\end{remark}

We end this section by giving the pseudocode for computing the indicators. Note that the sets, $S$ and $T$, of indices of nonzero rows of $\ind{IR}$ and nonzero columns of $\ind{IC}$ are essential for our algorithm. In the following, we denote the indicators by $(\ind{IR},S)$ and $(\ind{IC},T)$. 

\medskip
\begin{algorithmic}[1]
\Function{RowIndicator}{$A$, $B$, $C$, $V$}
   \State $\ind{IR}\gets A(BV)-CV$
   \State $S\gets \{i\in[n]\colon\, \mbox{row $i$ of \ind{IR} is nonzero}\}$
   \State \textbf{return} $(\ind{IR},S)$
\EndFunction

\medskip
\Function{ColIndicator}{$A$, $B$, $C$, $V$}
   \State $\ind{IC}\gets (V^TA)B-V^TC$
   \State $T\gets \{j\in[n]\colon\, \mbox{column $j$ of \ind{IC} is nonzero}\}$
   \State \textbf{return} $(\ind{IC},T)$
\EndFunction

\end{algorithmic}

\subsection{The algorithm} 
\label{sec:algo}

\begin{algorithm}[t]
\begin{algorithmic}[1]

\Require $AB-C$ contains no more than $k$ nonzeros, $V$ is the certificate
\Ensure $C=AB$
\State $(\ind{IR}, S)\gets${\sc RowIndicator}$(A, B, C, V)$ 
\State $(\ind{IC}, T)\gets$ {\sc ColIndicator}$(A, B, C, V)$
\For{$(i,j)\in (S\times [n]) \cup ([n]\times T)$}
    \State $C_{ij}\gets (AB)_{ij}$ 
\EndFor 
\end{algorithmic}
\caption{An $O(kn^2)$-time algorithm for correcting the matrix product with no more than $k$ errors.}\label{alg:primary}
\end{algorithm}

Once we have the row- and column-indicators $(\ind{IR},S)$ and $(\ind{IC},T)$, all erroneous entries in $C$ can be corrected using $O(kn^2)$ operations (see Algorithm~\ref{alg:primary} below). If $\ind{IC}=\mathbf{0}$ and $\ind{IR}=\mathbf{0}$, then $AB=C$ (Lemma~\ref{lem:both_zero}). Otherwise, we identify nonzero rows and columns, and then recompute all these rows and columns one by one.
 As mentioned in Remark~\ref{rmk:one_zero}, the index $i$ of a nonzero row is either in $S$ or there exists $j\in T$ such that $M_{ij}\neq 0$. Therefore, all nonzero rows can be identified either via $S$, or via computing the columns indexed by $T$; a symmetric procedure can be applied to identify all the nonzero columns. Since there are no more than $k$ nonzero entries in $M$, at most $k$ rows and columns of $AB$ are identified and to be computed. A straightforward computation using $O(n^2)$ operations per row/column leads to an algorithm of $O(kn^2)$ operations.




\medskip

\begin{remark}
    The upper bound of $kn^2$ is tight for the algorithm described above. For instance, if each of the $i$th row and the $j$th column of $AB-C$ contains about $k/2$ nonzeros, then each nonzero entry at row $i$ or column $j$ is located using  $\Theta(n^2)$ operations. 
    
\[\left(\begin{array}{cccccc}
    0 & 0 & \dots & a_{1j} & \dots & 0 \\
    0 & 0 & \dots &a_{2j} & \dots & 0 \\
    \vdots &\vdots &  & \vdots &  & \vdots \\
    a_{i1} & a_{i2} & \dots &a_{ij} & \dots & a_{in} \\ 
        \vdots & \vdots & & \vdots &  & \vdots \\
            0 & 0 &\dots & a_{nj} & \dots & 0 \\
\end{array}\right)\]
\end{remark}

%
%

\medskip
Below, we modify Algorithm~\ref{alg:primary} to obtain an $O(k^2 n + \sqrt{k} n^2)$-time algorithm. The algorithm works in two phases. For clarity, let $(\ind{IR}_0, S_0)$ and $(\ind{IC}_0, T_0)$ be the indicators in the very beginning, and let $M_0=AB-C$ be the matrix before the correction on $C$ starts. In the first phase, entries residing in both a detectable row and a detectable column are corrected. Note that right after this phase, nonzero entries are in a row belonging to $[n]\setminus S_0$ or a column belonging to $[n]\setminus T_0$, which means that each of these rows and columns contains more than $\sqrt{k}$ nonzeros in $M_0$. Then, in the second phase the remaining nonzero rows and columns are identified and corrected as in Algorithm~\ref{alg:primary}, except that the indicators are kept up-to-date by the procedure {\sc Update}. An illustration is given in Figure~\ref{fig:algo2}, and the pseudocode of the modified algorithm is given in Algorithm~\ref{alg:main}.

\begin{figure}
\[
\underset{
\begin{array}{c}
\\
S_0\cap S=\{1,3,4,5,6\}\\
\end{array}
}{\left(
\begin{array}{cccccc}
      0   &   0   &   0   & \ast & 0  &0\\
    \ast & \ast & \ast  &  \ast  & \ast &0\\
      0   &  \ast   &   0   & \ast & 0 &0\\
      0   &  \ast   &   0   & \ast & 0 &0\\
      0   &  \ast   &   0   & \ast & 0 &0\\
      0   &  \ast   &   0   &  0 & 0 &0\\
\end{array}\right) }  
\implies 
\underset{
\begin{array}{c}
\\
S_0\cap S=\{1,3,4,5,6\}\\
\end{array}
}{\left(
\begin{array}{cccccc}
      0   &   0   &   0   & [\ast] & 0  &0\\
    \ast & \ast & \ast  &  \ast  & \ast &0\\
      0   &  \ast   &   0   & \ast & 0 &0\\
      0   &  \ast   &   0   & \ast & 0 &0\\
      0   &  \ast   &   0   & \ast & 0 &0\\
      0   &  \ast   &   0   &  0 & 0 &0\\
\end{array}\right) }  
\implies 
\underset{
\begin{array}{c}
\\
S_0\cap S=\{3,4,5,6\}\\
\end{array}
}{\left(
\begin{array}{cccccc}
      0   &   0   &   0   & \mathbf{0} & 0  &0\\
    \ast & \ast & \ast  &  \mathbf{0}  & \ast &0\\
      0   &  \ast   &   0   & \mathbf{0} & 0 &0\\
      0   &  \ast   &   0   & \mathbf{0} & 0 &0\\
      0   &  \ast   &   0   & \mathbf{0} & 0 &0\\
      0   &  \ast   &   0   &  0 & 0 &0\\
\end{array}\right) }  
\]

\bigskip
\[
\implies 
\underset{
\begin{array}{c}
\\
S_0\cap S=\{3,4,5,6\}\\
\end{array}
}{\left(
\begin{array}{cccccc}
      0   &   0   &   0   & {0} & 0  &0\\
    \ast & \ast & \ast  &  {0}  & \ast &0\\
      0   &  [\ast]   &   0   & {0} & 0 &0\\
      0   &  \ast   &   0   & {0} & 0 &0\\
      0   &  \ast   &   0   & {0} & 0 &0\\
      0   &  \ast   &   0   &  0 & 0 &0\\
\end{array}\right) }  
\implies 
\underset{
\begin{array}{c}
\\
S_0\cap S=\emptyset\\
\end{array}
}{\left(
\begin{array}{cccccc}
      0   &   0   &   0   & {0} & 0  &0\\
    \ast & \mathbf{0}  & \ast  &  {0}  & \ast &0\\
      0   &  \mathbf{0}   &   0   & {0} & 0 &0\\
      0   &  \mathbf{0}    &   0   & {0} & 0 &0\\
      0   &  \mathbf{0}    &   0   & {0} & 0 &0\\
      0   &  \mathbf{0}    &   0   &  0 & 0 &0\\
\end{array}\right) }  
\]

\caption{An illustration for the first half of Phase~2 (lines~8 to~15) of Algorithm~\ref{alg:main}, assuming $\sqrt{k}=4$. We use the matrix $C-AB$ to illustrate the process of correction. First, an element $i$ of $S_0\cap S$ is chosen (line~9), say $i=1$. Then, row $1$ is recomputed to identify the nonzero columns, and in this example only column $4$ is identified (line~10). All nonzero entries in column~$4$ are corrected, and $S_0\cap S$ is updated accordingly (lines~11 to~14). Notice that $S=\{2,3,4,5,6\}$ at the moment. The for loop stops after the second round. The remaining nonzero entries  will be corrected later in lines 16 to~23.}
\label{fig:algo2}
\end{figure}

\begin{algorithm}
\begin{algorithmic}[1]
\Function{Update}{$i, j, A, B, C, V, (\ind{IR},S), (\ind{IC},T)$}
   \State $x\gets (AB)_{ij}-C_{ij}$
   \For{$l\in \left[\sqrt{k}\;\right]$}   
        \State $\ind{IR}_{il}\gets \ind{IR}_{il} + x\cdot V_{jl}$
        \State $\ind{IC}_{lj}\gets \ind{IC}_{lj} + x\cdot V_{il}$
   \EndFor  
   \State Update $S$ and $T$ accordingly 
   \State \textbf{return} $(\ind{IR},S), (\ind{IC},T)$
\EndFunction
\end{algorithmic}
\end{algorithm}

\begin{algorithm}[H]
\caption{An $O(k^2 n + \sqrt{k} n^2)$-time algorithm for correcting the matrix product with no more than $k$ errors.}\label{alg:main}
\begin{algorithmic}[1]

\Require $AB-C$ contains no more than $k$ nonzeros, $V$ is the certificate
\Ensure $C=AB$
\State $(\ind{IR}_0, S_0)\gets${\sc RowIndicator}$(A, B, C, V)$ \label{proc:phase1_start}
\State $(\ind{IC}_0, T_0)\gets$ {\sc ColIndicator}$(A, B, C, V)$
\For{$(i,j)\in S_0\times T_0$}
    \State $C_{ij}\gets (AB)_{ij}$ 
\EndFor \label{proc:phase1_end}
\State $(\ind{IR}, S)\gets $ {\sc RowIndicator}$(A, B, C, V)$ \label{proc:phase2_start}
\State $(\ind{IC}, T)\gets$ {\sc ColIndicator}$(A, B, C, V)$
\While{$S_0\cap S\neq \emptyset$}  \label{step:row_indicator_loop_start}
    \State $i\gets\mbox{an element of $S_0\cap S$}$  \label{setp:retrieve_row}
    \State $T'\gets \{j'\in[n]\colon\, (AB)_{ij'}\neq C_{ij'} \}$   \Comment{Identify the nonzero columns}
    \For{$(i',j')\in [n]\times T'$}
        \State $(\ind{IR},S), (\ind{IC}, T)\gets ${\sc Update}$(i', j', A, B, C, V, (\ind{IR},S),  (\ind{IC}, T))$
        \State $C_{i'j'}\gets (AB)_{i'j'}$  \label{step:column_update}
    \EndFor \label{step:row_indicator_loop_end}
\EndWhile
\While{$T_0\cap T\neq\emptyset$}  
	\State $j\gets \mbox{an element of $T_0\cap T$}$ \label{setp:retrieve_col}
    \State $S'\gets \{i'\in[n]\colon\, (AB)_{i'j}\neq C_{i'j} \}$ \Comment{Identify the nonzero rows}
    \For{$(i',j')\in S'\times [n]$} 
        \State $(\ind{IR},S), (\ind{IC}, T)\gets ${\sc Update}$(i', j', A, B, C, V, (\ind{IR},S), (\ind{IC}, T))$
        \State $C_{i'j'}\gets (AB)_{i'j'}$  \label{step:row_update}
    \EndFor
\EndWhile
\end{algorithmic}
\end{algorithm}

For the correctness of Algorithm~\ref{alg:main}, first we observe that once a row or a column is detected, it remains detectable during the execution of the algorithm until all erroneous entries in the row or the column are corrected. This observation is an immediate consequence of Lemma~\ref{clm:1st_modification}.

\begin{lemma}
\label{clm:1st_modification}
    Before the execution of line~\ref{proc:phase2_start}, for $i\in S_0\cap S$ row $i$ of $AB-C$ contains at most $\sqrt{k}$ nonzeros in $M_0$. Similarly, for $j\in T_0\cap T$ column $j$ contains at most $\sqrt{k}$ nonzeros in $M_0$. 
\end{lemma}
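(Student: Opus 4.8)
The plan is to prove the statement for rows; the claim for columns then follows by the symmetric argument, with $S_0,S,\ind{IR}$ replaced by $T_0,T,\ind{IC}$. Call a column of $M_0$ \emph{heavy} if it carries more than $\sqrt{k}$ nonzeros. The whole proof rests on two observations: that heavy columns are scarce, and that after Phase~1 (lines~\ref{proc:phase1_start}--\ref{proc:phase1_end}) every nonzero left in a row indexed by $S_0$ must sit in a heavy column. Bounding the number of the latter then bounds the number of surviving nonzeros in such a row.

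First I would record the contrapositive of Lemma~\ref{lem:detectability}: a nonzero column carrying at most $\sqrt{k}$ nonzeros is detectable, hence already lies in $T_0$, so a nonzero column outside $T_0$ is necessarily heavy. Next, fix $i\in S_0\cap S$. Since $i\in S_0$, Phase~1 executes $C_{ij}\gets(AB)_{ij}$ for every $j\in T_0$, which zeros every entry of row $i$ lying in a $T_0$-column; and because a correction only ever turns a nonzero of $AB-C$ into a zero, no new nonzero is ever created. Consequently, at the moment just before line~\ref{proc:phase2_start}, each surviving nonzero of row $i$ lies in a column outside $T_0$, i.e.\ in a heavy column. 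If $t$ such nonzeros remain, they occupy $t$ distinct heavy columns; counting the nonzeros of these columns without overlap (distinct columns share no entry) yields at least $t(\sqrt{k}+1)$ nonzeros in $M_0$. As $M_0$ has at most $k$ nonzeros, $t(\sqrt{k}+1)\le k$, giving $t\le k/(\sqrt{k}+1)<\sqrt{k}$. Hence row $i$ retains fewer than $\sqrt{k}$ nonzeros, which is the asserted bound.

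The delicate point---and where I would be most careful---is to keep straight which nonzeros are being counted. The quantity that is at most $\sqrt{k}$ is the number of nonzeros remaining in row $i$ once Phase~1 has cleared its $T_0$-columns, equivalently the nonzeros of $M_0$ that lie in heavy columns; the full original row could have carried more before those in detectable columns were corrected. In this division of labour the hypothesis $i\in S_0$ is what licenses clearing the $T_0$-columns of row $i$, whereas $i\in S$ is used only to guarantee that at least one nonzero survives, so that row $i$ is still detected at the start of Phase~2. This post-Phase-1 bound is exactly what is needed downstream: by Lemma~\ref{lem:detectability} a row with at most $\sqrt{k}$ nonzeros stays detectable as its entries are corrected one column at a time in the loop beginning at line~\ref{proc:phase2_start}, so a detected row cannot become undetectable before all of its errors have been fixed.
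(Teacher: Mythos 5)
Your proof is correct and takes essentially the same route as the paper's: since $i\in S_0$, Phase~1 clears every entry of row $i$ in a $T_0$-column, so each surviving nonzero lies in a non-detectable column, which by the contrapositive of Lemma~\ref{lem:detectability} carries more than $\sqrt{k}$ nonzeros of $M_0$, and counting such heavy columns against the total of at most $k$ nonzeros yields the bound. The only cosmetic differences are that you argue uniformly where the paper splits into cases according to whether row $i$ of $M_0$ already had at most $\sqrt{k}$ nonzeros, and your explicit count $t(\sqrt{k}+1)\le k$ makes the pigeonhole step fully precise.
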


\begin{proof}
    If row $i$ of $M_0$ contains no more than $\sqrt{k}$ nonzeros, the lemma holds immediately. Otherwise, since the entries that resides in a column containing at most $\sqrt{k}$ nonzeros were corrected at line~\ref{proc:phase1_end}, by the pigeonhole principle less than $\sqrt{k}$ entries remains in row $i$. A similar argument applies for $j\in T_0\cap T$.  
\qed\end{proof}

\begin{lemma}[Correctness of Algorithm~\ref{alg:main}]
\label{lem:correctness}
    All erroneous entries of $C$ are corrected during the execution of Algorithm~\ref{alg:main}.
\end{lemma}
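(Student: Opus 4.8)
The plan is to argue that Algorithm~\ref{alg:main} corrects every erroneous entry by tracking the structure of $M_0=AB-C$ through both phases and showing that no nonzero entry can escape correction. First I would handle Phase~1 (lines~\ref{proc:phase1_start}--\ref{proc:phase1_end}): after this phase, any entry $(i,j)$ with $i\in S_0$ and $j\in T_0$ has been explicitly overwritten with $(AB)_{ij}$, so every remaining erroneous entry lies in a row outside $S_0$ or a column outside $T_0$. By Lemma~\ref{lem:detectability} (contrapositive), a row index $i\notin S_0$ means row $i$ is non-detectable and hence contains more than $\sqrt{k}$ nonzeros in $M_0$; symmetrically for columns $j\notin T_0$. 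This is the key invariant entering Phase~2, and I would state it carefully as a consequence of the pigeonhole reasoning already given in the text preceding Remark~\ref{rmk:one_zero}.

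Next I would treat the two \texttt{while} loops of Phase~2 separately but symmetrically. For the first loop (lines~\ref{step:row_indicator_loop_start}--\ref{step:row_indicator_loop_end}), the plan is to show two things: (i) the loop makes progress and terminates, and (ii) when it exits, every nonzero column of the current $M$ that intersects a row in $S_0$ has been cleared. For progress, I would invoke Lemma~\ref{clm:1st_modification}: each $i\in S_0\cap S$ selected at line~\ref{setp:retrieve_row} corresponds to a row that had at most $\sqrt{k}$ nonzeros in $M_0$, so recomputing it at line with $T'$ finds finitely many columns, and correcting all entries in those columns (together with the \textsc{Update} calls keeping $(\ind{IR},S)$ and $(\ind{IC},T)$ current) strictly reduces the number of nonzeros. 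The remark preceding the statement — that a detected row remains detectable until fully corrected — guarantees that once $i$ enters $S$ it stays there exactly until row $i$ is clean, so the loop cannot stall. I would argue that upon exit $S_0\cap S=\es$, meaning no row indexed by $S_0$ is still detectable; combined with the entering invariant this forces every such row to be entirely zero in the current $M$.

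For the second loop I would give the symmetric argument, but with the added observation that the first loop's column-corrections may have altered rows indexed by $[n]\setminus S_0$; I must confirm these are precisely the rows handled by identifying nonzero columns via $T_0\cap T$. The crux is a completeness argument: I would show that after both loops, $M=\mathbf 0$. Suppose not, and let $(i,j)$ be a surviving nonzero. By the Phase~1 invariant, $i\notin S_0$ or $j\notin T_0$. If $j\in T_0$, then column $j$ was detectable at the start of Phase~2, and by the remark it remained detectable until cleared; since the second loop runs until $T_0\cap T=\es$, column $j$ must have been fully corrected, a contradiction. The genuinely delicate case — and what I expect to be the main obstacle — is ensuring the interaction between the two phases and the two loops leaves no gap: an entry $(i,j)$ with $i\notin S_0$ and $j\notin T_0$ is never directly targeted, so I must show via Remark~\ref{rmk:one_zero} that every nonzero row has a nonzero entry in a detectable column (and vice versa), so such an entry is always reached transitively through some detectable row or column during correction, and that the \textsc{Update} bookkeeping keeps $S$ and $T$ accurate enough that this reachability is never lost before the entry is fixed.
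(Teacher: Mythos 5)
Your proposal is correct and follows essentially the same route as the paper's own proof: the phase-1 pigeonhole invariant, persistence of detectability via Lemma~\ref{clm:1st_modification} together with the up-to-date indicators, and Remark~\ref{rmk:one_zero} to handle the delicate case of an entry with $i\notin S_0$ and $j\notin T_0$ through a witness $M_{i'j}\neq 0$ with $i'\in S_0$. The one detail your ``reached transitively'' step leaves implicit, which the paper makes explicit and which discharges that obligation, is that corrections in the first loop are whole-column operations (line~\ref{step:column_update} iterates over $[n]\times T'$), so the witness $C_{i'j}$ can only be corrected as part of correcting all of column $j$ --- including $C_{ij}$ --- whether or not $i'$ is ever picked at line~\ref{setp:retrieve_row}.
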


\begin{proof}
    Suppose to the contrary that there exist $i$ and $j$ such that $C_{ij}$ is not corrected during the execution. Clearly, $i\notin S_0$ or $j\notin T_0$. If either $j\in T_0$ or $i\in S_0$, by Lemma~\ref{clm:1st_modification} $M_{ij}$ remains nonzero until $C_{ij}$ is corrected so it would be corrected at line~\ref{step:column_update} or line~\ref{step:row_update}. Thus we may assume that $i\notin S_0$ and $j\notin T_0$. Since $j\notin T_0$ and $M_{ij}\neq 0$, column $j$ of $M_0$ contains more than $\sqrt{k}$ nonzeros. By Remark~\ref{rmk:one_zero}, there is a nonzero entry $M_{i'j}$ such that $i'\in S_0$. If $i'$ is picked at line~\ref{step:row_indicator_loop_start}, then column $j$ of $C$ is corrected; otherwise all erroneous entries in row $i'$ of $C$ have to be corrected before $i'$ is picked during the loop. However, the correction of $C_{i'j}$ implies that all erroneous entries in column $j$ of $C$ are corrected, which leads to a contradiction. 
\qed\end{proof}

\medskip
Regarding the number of operations, keeping the row- and column-indicators up-to-date ensures that the process of identification is performed only if there is an erroneous entry, not corrected yet, in the corresponding row or column. We show this in Theorem~\ref{thm:time}.

\begin{theorem}
\label{thm:time}
    Given three $n\times n$ matrices of integers $A$, $B$, and $C$ with $C$ different from $AB$ by no more than $k$ entries, then all erroneous entries can be corrected by a deterministic algorithm using $O(k^2n+\sqrt{k}n^2)$ operations. In addition, the  values involved in running the algorithm are of $O(\alpha^2n^3)$, where $\alpha$ is the largest absolute value of an entry in $A$, $B$, and $C$. 
\end{theorem}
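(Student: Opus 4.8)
The plan is to establish the two claims of Theorem~\ref{thm:time} separately: the operation count $O(k^2n+\sqrt{k}n^2)$, and the magnitude bound $O(\alpha^2n^3)$ on all manipulated values. For the operation count I would first account for the one-time costs. Computing $BV$ and $CV$ (and symmetrically $V^TA$, $V^TC$) in the four calls to {\sc RowIndicator} and {\sc ColIndicator} dominates: each such product is an $n\times n$ matrix times an $n\times\sqrt{k}$ matrix, costing $O(\sqrt{k}n^2)$ operations, and there are a constant number of these calls (lines~\ref{proc:phase1_start}, \ref{proc:phase1_end}, \ref{proc:phase2_start} and the following line). The first-phase double loop over $S_0\times T_0$ corrects entries using $O(n)$ operations each to evaluate $(AB)_{ij}$; since $|S_0|,|T_0|\le k$ but more sharply the number of entries lying in both a detectable row and a detectable column is $O(k)$, this contributes $O(kn)$. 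So the setup is within budget, and the work concentrates in Phase~2.

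For Phase~2 the key accounting idea, which the paragraph before the theorem hints at, is that the {\sc Update}-maintained indicators guarantee we never perform an identification step for a row or column that has no remaining uncorrected error. Each iteration of the first while loop picks some $i\in S_0\cap S$, spends $O(n)$ operations computing the set $T'$ of genuinely-nonzero columns in row $i$ (evaluating $(AB)_{ij'}-C_{ij'}$ for each $j'$), and then for each $(i',j')\in[n]\times T'$ calls {\sc Update} at cost $O(\sqrt{k})$ and corrects one entry at cost $O(n)$. The crucial charging argument is that every column $j'\in T'$ genuinely contains an uncorrected error at the moment it is processed, so each of the columns corrected across the whole loop is distinct; by Lemma~\ref{clm:1st_modification} these are columns of $M_0$ with more than $\sqrt{k}$ nonzeros, of which there can be at most $k/\sqrt{k}=\sqrt{k}$, so the number of distinct columns corrected is $O(\sqrt{k})$. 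Each such column correction touches $O(n)$ entries, each incurring an $O(\sqrt{k})$ update and an $O(n)$ recomputation, giving $O(\sqrt{k}\cdot n\cdot(\sqrt{k}+n))=O(kn+\sqrt{k}n^2)$. The identification overhead is $O(n)$ per iteration, and the number of iterations is bounded by $|S_0\cap S|\le k$, contributing $O(kn)$; by a symmetric analysis the second while loop is within the same bound. The total Phase~2 cost is therefore $O(k^2n+\sqrt{k}n^2)$, where the $k^2n$ term would arise from the coarser bound counting up to $k$ iterations each recomputing up to $k$ entries at $O(n)$ cost before the sharper charging is applied.

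For the magnitude bound I would argue entrywise. Every integer in $A$, $B$, $C$ is at most $\alpha$ in absolute value, so each entry of $AB$ is a sum of $n$ products of two such integers, bounded by $n\alpha^2$; hence each entry of $AB-C$ is $O(n\alpha^2)$. The certificate $V$ has entries in $[p]$ with $p$ the smallest prime exceeding $n$, so $V_{ij}=O(n)$ by Bertrand's postulate. The indicator entries, being inner products of a length-$n$ row of $AB-C$ with a column of $V$, are bounded by $n\cdot O(n\alpha^2)\cdot O(n)=O(n^3\alpha^2)$; the incremental {\sc Update} only adds terms $x\cdot V_{jl}$ with $x=O(n\alpha^2)$ and $V_{jl}=O(n)$, so it never pushes an indicator entry past the magnitude it would have had were it recomputed from scratch. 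Thus all manipulated values are $O(\alpha^2n^3)$.

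I expect the main obstacle to be the charging argument in Phase~2, specifically justifying that the dominant cost is $O(\sqrt{k}n^2)$ rather than a naive $O(kn^2)$. The subtlety is that up to $k$ while-loop iterations may occur, each paying an $O(n)$ identification cost, yet the expensive per-entry recomputations must be charged only against the $O(\sqrt{k})$ distinct high-density columns (or rows) guaranteed by Lemma~\ref{clm:1st_modification}; making precise that no column is corrected twice, and that the up-to-date indicators prevent redundant identification work, is where care is needed. The magnitude bound is routine once the entrywise estimates are assembled.
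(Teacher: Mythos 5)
Your magnitude bound is sound and matches the paper's, but the operation count has two genuine accounting errors, and together they mean the proposal does not actually establish the bound $O(k^2n+\sqrt{k}n^2)$. First, Phase~1 does not cost $O(kn)$: the loop ranges over \emph{all} pairs $(i,j)\in S_0\times T_0$, and since $|S_0|$ and $|T_0|$ can each be $\Theta(k)$ (e.g., $k$ errors in pairwise distinct rows and columns), this is up to $k^2$ pairs, each requiring an $O(n)$ inner product to evaluate $(AB)_{ij}$. Your restriction to ``entries lying in both a detectable row and a detectable column'' does not rescue the bound, because deciding whether a given pair is erroneous already costs the same $O(n)$ inner product as correcting it. Phase~1 is therefore $\Theta(k^2n)$ in the worst case, and this is precisely where the $k^2n$ term of the theorem comes from; your attribution of that term to a ``coarser bound'' in Phase~2 misreads the algorithm.

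Second, in Phase~2 you undercount the identification step: computing $T'=\{j'\in[n]\colon (AB)_{ij'}\neq C_{ij'}\}$ requires recomputing the entire row $i$ of $AB$, i.e., $n$ inner products of length $n$, which is $O(n^2)$ per while-loop iteration, not $O(n)$. Combined with your iteration bound $|S_0\cap S|\leq k$, this would give $O(kn^2)$ and the claimed complexity fails. The missing idea --- and the paper's key point --- is that the number of iterations of each while loop is $O(\sqrt{k})$, not $O(k)$: because the indicators are kept up-to-date by {\sc Update}, any $i$ picked from $S_0\cap S$ indexes a row that still contains an uncorrected error, so $T'\neq\emptyset$ and each iteration fully corrects at least one previously uncorrected column; since (as you correctly observe via Lemma~\ref{lem:detectability}) every column corrected in Phase~2 has more than $\sqrt{k}$ nonzeros in $M_0$, at most $\sqrt{k}$ distinct columns can ever appear across all iterations, hence at most $O(\sqrt{k})$ iterations occur. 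You had the ``at most $\sqrt{k}$ distinct columns'' estimate in hand but used it only to charge the per-entry corrections, never to cap the number of iterations, which is the step that makes the $O(n^2)$-per-iteration identification cost sum to $O(\sqrt{k}n^2)$. (A minor further point: {\sc Update} costs $O(n+\sqrt{k})$, not $O(\sqrt{k})$, since computing $x=(AB)_{ij}-C_{ij}$ takes an inner product; this does not change the final bound.)
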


\begin{proof}
    We claim that Algorithm~\ref{alg:main} is the requested algorithm. The correctness of Algorithm~\ref{alg:main} follows from Lemma~\ref{lem:correctness}. For the number of operations, phase~1 (lines~\ref{proc:phase1_start} to~\ref{proc:phase1_end}) takes $O(k^2n)$ operations. For the remaining nonzero entries $M_{ij}$, clearly $i\notin S_0$ or $j\notin T_0$, and by Lemma~\ref{lem:detectability} each nonzero row or column identified in phase~2 contains at least $\sqrt{k}$ nonzeros in $M_0$. Thus, the number of distinct rows and columns to be corrected in phase~2 is at most $\sqrt{k}$. Since a row or a column is corrected right after being identified and the indicators are kept up-to-date, a row or a column would not be identified twice. Along with the fact that {\sc Update} takes  $O(n+\sqrt{k})$ operations, the total number of operations is of $O(\sqrt{k}n^2)$.

    \medskip
    Last, according to Bertrand's postulate, there exists a prime $p$ with $n<p<2n$. It follows that the value of an entry in the certificate $V$ is of $O(n)$. Consequently, the largest absolute value of an entry in $(AB-C)V$ is of $O(\alpha^2n^3)$. 
    \qed
\end{proof}

\medskip
As a final remark, in Algorithm~\ref{alg:main} we have to retrieve elements from $S_0\cap S$ at line~\ref{setp:retrieve_row}, which is equivalent to finding an element of $S_0$ and determining whether it is also an element of the present $S$. Since $S_0$ is static and every element of $S_0$ is processed exactly once, this  can be done efficiently by using an array of integers to represent $S_0$ and $S$. Each integer is a counter, recording the number of nonzeros in the corresponding row of $\ind{IR}_0$/$\ind{IR}$. The same implementation can be applied for $T_0$ and $T$, so the expected complexity can be achieved. In addition, the prime $p$ can be found in $O(n^2)$ time, e.g. using the sieve of Eratosthenes, and then the certificate $V$ can be constructed in $O(\sqrt{k}n)$ time.

\section{Concluding remarks}

In this paper, we investigate the problem of correcting matrix products over the ring of integers. We propose a simple deterministic $O(\sqrt{k}n^2+k^2n)$-time algorithm, where $n$ is the row- and column-dimensions, and $k$ is an upper bound on the number of erroneous entries. When $k<n$, the proposed algorithm is more efficient than the naive $\Theta(n^3)$-time algorithm to multiply two $n\times n$ matrices. For certain $k$, it also slightly improves the existing ones, e.g.~\cite{GkasieniecLLPT17,Kunnemann18,kutzkov}. In addition, the values involved during the execution are in a reasonable range, namely polynomial in $n$ and the input values, so the algorithm can be implemented with the requested time complexity on conventional computation models. 

\section*{Acknowledgement}
	The authors thank the anonymous reviewers, whose comments help improving the presentation, and also thank Wing-Kai Hon and Meng-Tsung~Tsai for the discussion regarding the computation models. This research was supported in part by the Ministry of Science and Technology of Taiwan under contract MOST grant 110-2221-E-003-003-MY3.

\bibliographystyle{plain}
\bibliography{ref}

\end{document}